\newtheorem{proposition}{Proposition}
\newtheorem{lemma}{Lemma}
\begin{document}
\title{Throughput Maximization for Wireless Powered Communications Harvesting from \\Non-dedicated Sources\thanks{Hongxing Xia, Yongzhao Li and Hailin Zhang are with State Key Laboratory of Integrated Services Networks, Xidian University, Xi'an, 710071,China.}}

\author{Hongxing Xia,  Yongzhao Li and Hailin Zhang}

\maketitle

\begin{abstract}
We consider the wireless powered communications where users harvest energy from non-dedicated sources. The user follows a harvest-then-transmit protocol:  in first phase of a slot time the source node  harvests energy from a nearby conventional Access Point,  then transmit information to its destination node or relay node in the second phase. We obtain the optimal\textit{ harvesting ratio} to maximize the expected throughput for direct transmission (DT )and decode forward (DF) relay under outage constraint, respectively. Our results reveal that the optimal harvest ratio for DT is dominated by the outage constraint while for DF relay, by the data causality .
\end{abstract}

\begin{IEEEkeywords}
Wireless powered communications, throughput optimization,
harvest ratio, non-dedicated sources.
\end{IEEEkeywords}

\section{Introduction}
By extracting radio frequency(RF) energy from the nearby wireless access points(APs), the energy constrained wireless devices like wireless sensors can theoretically  posses perpetual lifetime  \cite{Visser2013Rf}. Comparing with traditional energy harvesting techniques (e.g. solar energy, thermal gradients energy  and vibrations or movements energy),  RF energy harvesting technique has advantages such as all-weather operation and a compact harvester\cite{paradiso2005energy}. Very recently, a commercial program named FreeVolt has been launched in London, devoting to absorb energy from cell towers, Wi-Fi access points and TV broadcasters to charge low energy IoT devices \cite{FreeVolt15}. As the feasibility of RF-energy harvesting technique improves, the wireless powered communications (WPC) has attracted growing attentions \cite{huang2015some}.

For wireless nodes without fixed power supply,  a harvest-then-transmit protocol has been commonly adopted where the node first harvest energy from radio signal transmitter and then use the harvested energy to transfer information to its intended receiver \cite{yin2013throughput,ju2014throughput,Rui2014Throughput,Yue2015Spatial}. The half-duplex operation mode produces an interesting trade-off for energy harvesting communications: what is the optimal harvesting-time to transmitting-time ratio for obtaining maximum achievable throughput. In \cite{yin2013throughput}, the impact of energy harvesting rate on throughput optimization was studied. Their results indicated that the optimal harvest-ratio decreases with increasing energy harvest rate.  In \cite{ju2014throughput}, the doubly near-far problem was investigated for  WPC networks consisting many RF-powered nodes. To counter the unfair throughput allocation among the near and far users, a common-throughput optimization problem was formulated and solved. In similar settings, \cite{Rui2014Throughput} studied the sum-throughput maximization problem for the case that the nodes can save energy for later use. A large-scale WPC network was studied in \cite{Yue2015Spatial}, where the node's spatial throughput is maximized subject to successful information transmission probability constraint.

As the energy harvested from RF is  quite low and then one can use relay node to improve the transmission rate of the source's information \cite{ahmed2007throughput}.  The outage probability of energy harvesting relay-aided link over fading channel was studied in \cite{li2016outage}. However, the  harvesting profile of RF energy was not considered in the paper.

In previous works, people usually assume that  the wireless nodes harvest energy from dedicated sources, where the AP and the nodes can operate synchronously in a cooperative mode. In this context,  the AP keeps silent when the nodes are transmitting information. However, harvesting from dedicated sources like Hybrid AP  is still not practical at resent because of high upgrade cost and low energy efficiency. On the contrary, harvesting from non-dedicated sources like WiFi, small base stations and TV stations is more practical and easy to implement.

In this paper, we consider the throughput optimization problem in the WPCs powered by non-dedicated sources. In particular, we assume that the wireless powered user harvests energy from a conventional nearby AP, and the AP keeps transmitting to its associate receiver even when the user starts transmitting information. Therefore, from the view of wireless powered user, the AP first act as an energy source and then as an interference source.
To the best of my knowledge, throughput optimization under this settings has not been addressed as so far. Interestingly, since the energy and interference come from the same source, improving the transmitting power of AP will not bring any profit to WPC users. And then the {\it harvest ratio} is becoming  the only parameter that affects the performance.

Since the harvested RF energy is usually very weak, the outage may occur more frequently than conventional communications. As well known, the outage probability decreases with increasing {\it signal-to-interference} ratio (SIR).  However, as will be shown later, the throughput is a quasi-concave function over SIR. Then there must be a trade-off between the throughput and the outage probability. Different from previous work, We integrate the outage probability constraint to the throughput optimization problem in this paper.Besides,  the throughput of the WPC with relay has not been extensively studied as far as I know. And we will examine this problem considering outage as well as data causality constraint.

The main contributions of this paper are listed as follows.
\begin{itemize}
\item We propose a novel non-dedicated sources powered wireless communication model.  The protocol of direct transmission and decode-and-forward relay transmission are presented, respectively.
\item The maximization of expected throughput for direct transmission  subject to outage probability constraint is formulated and solved.  The  upper bound of the expected  throughput is given in close form.
\item The maximization of expected throughput for DF relay transmission subject to outage and data causality constraints is formulated. We solve this problem by dividing it into two sub-problems.
\item Our results show that the optimal expected throughput for direct transmission is dominated by the outage constraint in most practical scenarios , while for DF relay transmission by the data causality constraint.
\end{itemize}

\section{Preliminaries}
\subsection{System model}

In this paper, we consider a wireless powered communication as shown in Fig.\ref{Fig:System}. We assume AP is in full-load operation and transmits with fixed power $P_{A}$.
The energy harvesting nodes $S, R$ has no fixed power supply and extract energy from radio signal radiated by a conventional AP. Suppose the AP and energy harvesting nodes operate in the same frequency band.

We consider two WPC schemes:  direct transmission (DT) and Decode and Forward(DF) relay. For the case without relay, the system follows a {\it harvest-then-transmit} MAC protocol as shown in Fig.\ref{Fig:protocol}(a): In each time slot, the source node $S$ harvests energy from the AP in the first $\alpha T$ , while employ the remaining time fraction of $(1-\alpha)T$ to directly transmit information to the destination node $D$. The symbol $\alpha$ denotes {\it harvest ratio} and $0<\alpha<1$ .  We further assume that node $S$ uses up all the harvested energy to transmit information in the second phase.

For the case with relay,  we assume there is a relay node $R$ helps node $S$ transfer information to the destination. The system follows a {\it harvest-transmit-relay} MAC protocol as shown in Fig.\ref{Fig:protocol}(b): The source node $S$ and the relay node $R$ first harvest energy from the AP for $\alpha T$ time. Secondly,  node $S$ transfer information to node $R$ with all of the harvested energy in $\beta T$ time, $0<\beta<1$. Lastly,  node $R$ relay the information from $S$ to $D$ in $(1-\alpha-\beta)T$ time, by using up the harvested energy in the frist phase. To simplify analysis, we assume normalized slot duration, i.e. $T=1$,  in the left part of this paper.

\begin{figure}
\centering
    \includegraphics[width=.5\linewidth]{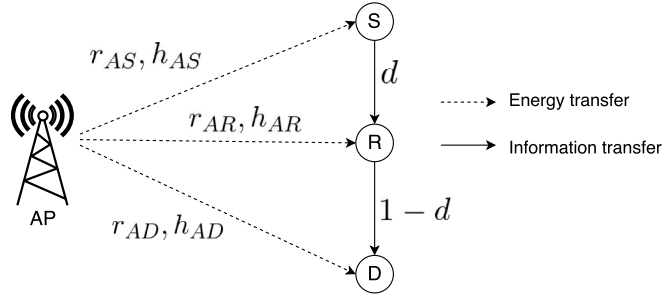}
    \caption{System model}
    \label{Fig:System}
\end{figure}

\begin{figure}
\centering
    \includegraphics[width=0.5\linewidth]{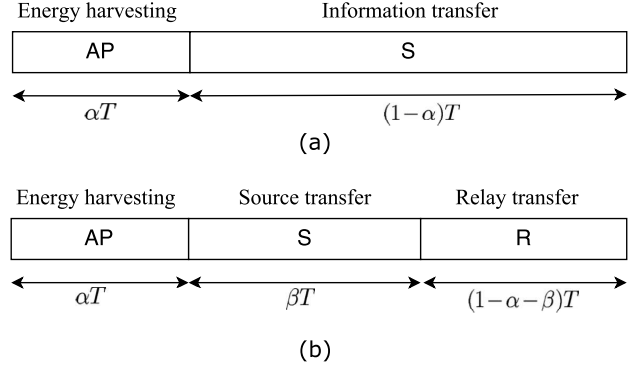}
    \caption{MAC protocol for direct transmission and DF relay transmission}
    \label{Fig:protocol}
\end{figure}

\textbf {Non-dedicated Sources Assumption}:  Other than previous works, we assume energy harvesting nodes have no cooperation with the AP, which means the AP keeps transmitting to serve its own users no-matter the node $S$ is transmitting or not. For such assumption, the AP first acts as an energy source and latter as an interferer. Some new challenges should be observed and formulated.  This assumption is real for those low-power sensors harvesting energy from nearby information AP, like WiFi and small cell base stations.

\subsection{Energy harvesting model}

We assume that the energy transfer channels from AP to $S, R$ and $D$ are subject to Rayleigh fading with unit mean and large-scale path loss.  Let $r_{AS}$ , $r_{AR}$ and $r_{AD}$ denote the distance from AP to $S$, $R$ and $D$, respectively.  Then in the energy harvesting phase, the accumulated energy of the node $S$ and $R$ are:
\begin{equation}\label{eqn:es}
E_{S}=\alpha \zeta P_A h_{AS} r_{AS}^{-\mu}
\end{equation}
and
\begin{equation}\label{eqn:er}
E_{R}=\alpha \zeta P_A h_{AR} r_{AR}^{-\mu},
\end{equation}
respectively, where $0<\zeta<1$ is the energy harvesting efficiency, $ h_{AS}, h_{AR}$ are independent and identically distributed (i.i.d.) exponential random variables with unit mean, $\mu>2$ is the path-loss exponent. For simplification, we assume $\zeta=1$ in the remaining parts.

\subsection{Information transfer model}

In transmission without relay, only the direct transmission between $S$ and $D$ is available. Without loss of generality, the distance between them $r_{SD}$ is set to be $1$. The channel power gain  between $S$ and $D$ is assumed to be only determined by their distance as: $h_{SD}=r_{SD}^{-\mu}=1$. The channel power gain between the AP and the node $D$ is given by $h_{AD}r_{AD}^{-/mu}$. Let $x_A(t)$ and $x_S(t)$ with zero mean and unit power, denote the transmit signal of the AP and the node $S$, respectively. For direct transmission link, the baseband equivalent model for this channel is
\begin{equation}
y_{SD}(t)=\sqrt{P_S}x_S(t)+\sqrt{P_A h_{AD}r_{AD}^{-\mu}}x_A(t)+n_D(t)
\end{equation}
where $y_{SD}(t)$ denotes the received signal from direct link, $n_D(t)$ is the additive white noise with power $\sigma_{n_D}^2$. As we assume the node $S$ use up the harvested energy to transfer information, and the previous research showed that keeping constant transmitting power for such energy harvesting system will achieve the  maximum channel capacity \cite{devillers2012general}. Therefore  the transmitting power of the node $S$  is,
\begin{equation}\label{eqn:ps}
P_S=\frac{E_{S}}{1-\alpha}
\end{equation}

The receiving SINR of node $D$ for direct transmission is
\begin{equation}\label{eqn:sinr-dt}
\gamma_{DT}=\frac{P_S}{P_Ah_{AD}r_{AD}^{-\mu}+\sigma_{n_D}^2},
\end{equation}
The throughput of direct link $S-D$ is given as following,
\begin{equation}\label{eqn:th-dt}
R_{DT}=(1-\alpha)\log(1+\gamma_{DT}).
\end{equation}

For DF relay transmission, we assume that the distance between the source node and the relay node is $d$, the distance between relay and destination $1-d$ \cite{ahmed2007throughput}. Then the channel power gain of link $S-R$ and $R-D$ is, respectively, $d^{-\mu}$ and $(1-d)^{-\mu}$. The information transmission phase is divided into two sub-phases. In the first $\beta $ time, the information is transmitted to the relay node $R$, the received signal at $R$ is
\begin{equation}\label{eqn:ysr}
y_{SR}(t)=\sqrt{P_S^{co}d^{-\mu}}x_S(t)+\sqrt{P_Ah_{AR}r_{AR}^{-\mu}}x_A(t)+n_R(t),
\end{equation}
where $n_R(t)$ denotes the white noise at node $R$. Accordingly, the received SINR of  node $R$ and  the throughput of $S-R$ link is given as (\ref{eqn:sinr-sr}) and (\ref{eqn:th-sr}) , respectively.
\begin{equation}\label{eqn:sinr-sr}
\gamma_{SR}=\frac{P_S^{co}d^{-\mu}}{P_Ah_{AR}r_{AR}^{-\mu}+\sigma_{n_R}^2},
\end{equation}
\begin{equation}\label{eqn:th-sr}
R_{SR}=\beta\log(1+\gamma_{SR}).
\end{equation}
Since we assume all the energy harvesting nodes use up the energy in their batteries or capacities,  the transmitting power of node $S$ in the cooperative mode is $P_S^{co}=\frac{E_S}{\beta}$.

In the following $1-\alpha-\beta$ time, the relay node transfer the information to the destination node $D$ by using all of its harvested energy. The received baseband-equivalent signal at node $D$ is
\begin{equation}\label{eqn:yrd}
y_{RD}=\sqrt{P_R (1-d)^{-\mu}}x_R(t)+\sqrt{P_Ah_{AD}r_{AD}^{-\mu}}x_A(t)+n_D(t),
\end{equation}
where the transmitting power of node $R$ is $P_R=\frac{E_R}{1-\alpha-\beta}$ , $x_r(t)$ is the relay signal with unit mean power, $n_D(t)$ denotes the noise signal with power $\sigma_{n_D}^2$. Similar to (\ref{eqn:sinr-sr}) and (\ref{eqn:th-sr}), the received SINR of node $D$ and throughput of  $R-D$ link is, respectively,
\begin{equation}\label{eqn:sinr-rd}
\gamma_{RD}=\frac{P_R (1-d)^{-\mu}}{P_Ah_{AD}r_{AD}^{-\mu}+\sigma_{n_D}^2},
\end{equation}
and
\begin{equation}\label{eqn:th-rd}
R_{RD}=(1-\alpha-\beta)\log_2(1+\gamma_{RD}).
\end{equation}

\subsection{Preliminary Mathematical Results}
\begin{lemma}\label{lem:cdf}
Assume $H_1$ and $H_2$ are independent exponential distribution variables with unit mean,  $k\in R^+$. Then for variable  $X=k\frac{H_1}{H_2}$,  the  probability density function (PDF)  is $\frac{k}{(k+x)^2}$, the cumulative distribution function (CDF) is $\frac{k}{k+x}$.
\end{lemma}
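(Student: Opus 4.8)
The plan is to derive the law of $X=kH_1/H_2$ directly from the joint density of the two independent unit-mean exponentials and then read off both the PDF and the distribution function. Since $H_1$ and $H_2$ are independent with marginals $e^{-h_1}$ and $e^{-h_2}$ on $(0,\infty)$, their joint density is $e^{-(h_1+h_2)}$, so everything reduces to an elementary integral over the region $\{kh_1/h_2\le x\}$.

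First I would condition on $H_2$. For fixed $H_2=h_2>0$ the event $\{X\le x\}$ coincides with $\{H_1\le (x/k)h_2\}$, whose conditional probability is $1-e^{-(x/k)h_2}$ by the exponential CDF. Averaging over $H_2$ then gives
\[
F_X(x)=\int_0^\infty\left(1-e^{-\frac{x}{k}h_2}\right)e^{-h_2}\,dh_2,
\]
a difference of two standard exponential integrals that evaluates to $1-(1+x/k)^{-1}=x/(k+x)$. Differentiating in $x$ yields $f_X(x)=\frac{d}{dx}\frac{x}{k+x}=\frac{k}{(k+x)^2}$, the claimed density. As a cross-check I would also run the transformation route: the ratio density of $Y=H_1/H_2$ is $\int_0^\infty e^{-yh_2}e^{-h_2}h_2\,dh_2=(1+y)^{-2}$, and the linear rescaling $X=kY$ gives $f_X(x)=k^{-1}f_Y(x/k)=k/(k+x)^2$, in agreement.

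The computation is entirely routine, so there is no serious analytic obstacle; the only places that demand attention are the scaling (Jacobian) factor $k^{-1}$ in the rescaling step and the bookkeeping between the distribution function and its tail. In particular, my calculation gives $P(X\le x)=x/(k+x)$, so that $k/(k+x)=P(X>x)$ is the complementary (survival) distribution rather than the CDF proper. This tail form is the one that feeds directly into the outage-probability expressions in the subsequent analysis, which is presumably why it is the quantity recorded in the statement.
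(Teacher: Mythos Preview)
Your derivation is correct and matches the paper's own argument: the paper also obtains $F_X(x)=x/(k+x)$ (it cites the ratio density $1/(1+t)^2$ from the literature rather than deriving it by conditioning, but your cross-check via rescaling is exactly their route). Your closing observation is on point as well: the lemma \emph{statement} mislabels $k/(k+x)$ as the CDF, whereas both your computation and the paper's own proof in the appendix yield $P(X\le x)=x/(k+x)$, and this is the form actually used downstream (e.g.\ in the outage expression~(\ref{eqn:pout})).
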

\begin{proof}
	See appendix \ref{app:1}
\end{proof}
\begin{lemma}\label{lem:mean}
Assume X is a random with PDF  $\frac{k}{(k+x)^2}$, $k\in R^+$, the expectation of function $f(X)=\log_2(1+X)$ is $\mathbb{E}[f(X)]=\frac{\log_2(1/k)}{1/k-1}$.
\end{lemma}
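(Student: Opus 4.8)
The plan is to evaluate the expectation directly as the integral $\mathbb{E}[f(X)] = \int_0^\infty \log_2(1+x)\,\frac{k}{(k+x)^2}\,dx$ and to reduce it to an elementary rational integral by integration by parts. First I would write $\log_2(1+x) = \ln(1+x)/\ln 2$ to pull the constant $1/\ln 2$ outside, leaving $\frac{1}{\ln 2}\int_0^\infty \ln(1+x)\,\frac{k}{(k+x)^2}\,dx$.

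For the integration by parts I would set $u=\ln(1+x)$ and $dv=\frac{k}{(k+x)^2}\,dx$, and crucially choose the antiderivative $v=-\frac{k}{k+x}$ (the complementary CDF from Lemma \ref{lem:cdf}, up to sign) rather than any other constant of integration, precisely so that the boundary term vanishes at both ends: at $x=0$ one has $u=0$, while as $x\to\infty$ the factor $v\sim -k/x$ decays fast enough to kill the logarithmic growth of $u$. Since $du=\frac{dx}{1+x}$, this leaves $\frac{1}{\ln 2}\int_0^\infty \frac{k}{(k+x)(1+x)}\,dx$.

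The remaining integral is handled by partial fractions: $\frac{1}{(k+x)(1+x)} = \frac{1}{k-1}\left(\frac{1}{1+x}-\frac{1}{k+x}\right)$, whose antiderivative is $\frac{1}{k-1}\ln\frac{1+x}{k+x}$; evaluating from $0$ to $\infty$ gives $\frac{\ln k}{k-1}$, since the upper limit contributes $\ln 1 = 0$ and the lower limit contributes $\ln(1/k)$. Combining, $\mathbb{E}[f(X)] = \frac{k\ln k}{(k-1)\ln 2} = \frac{k\log_2 k}{k-1}$, and dividing numerator and denominator by $-k$ together with $\log_2(1/k)=-\log_2 k$ brings this to the stated form $\frac{\log_2(1/k)}{1/k-1}$.

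I expect the only real subtlety to be twofold. First, justifying the vanishing boundary term at infinity, which rests on $\log_2(1+x)\cdot\frac{k}{k+x}\to 0$ (logarithmic growth dominated by the linear decay of the tail); this is exactly where choosing the antiderivative $v=-\frac{k}{k+x}$ pays off. Second, the partial-fraction step and the final formula are singular at $k=1$, so there the answer should be read as a limit; a one-line L'Hôpital argument on $\frac{k\log_2 k}{k-1}$ gives $\mathbb{E}[f(X)]\to \frac{1}{\ln 2}$ as $k\to 1$, consistent with the closed form. Everything else is routine.
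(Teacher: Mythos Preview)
Your proposal is correct and in fact supplies the computation that the paper omits: the paper's own proof merely writes down the integral $\int_0^\infty \log_2(1+x)\,\frac{k}{(k+x)^2}\,dx$ and states the closed form without any intermediate steps, whereas your integration-by-parts plus partial-fractions argument fills in exactly those details. The two additional remarks you make (the vanishing boundary term and the $k=1$ limit) are not addressed in the paper at all.
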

\begin{proof}
	See appendix  \ref{app:2}
\end{proof}

\section{Problem Formulation}\label{sec:probform}

From previous analysis we find that the system throughput and the received SINR  are both random variables for given time allocation strategy. In a specific time slot, the time allocation ratio can affect the throughput performance. For example, in direct transmission case, allocating more time to harvest energy will increase the transmitting power according to (\ref{eqn:ps}). Thus the received SINR would be  increased and lower outage probability can be achieved. However, this strategy would shorten the transmission duration thus decrease the system throughput.

To observe how the{ \it harvest-ratio } $\alpha$ affects the throughput as well as indirectly,  how the throughput varies with increasing SIR, we consider a deterministic case that the channel gains $h_{AS}$ and $h_{AR}$ are both assumed to equal to $1$. By introducing (\ref{eqn:es}), (\ref{eqn:ps}) to (\ref{eqn:sinr-dt}), we get the SINR of node $D$ as
\begin{equation}\label{eqn:gkk}
\gamma_{DT}=\frac{\alpha r_{AS}^{-\mu}}{(1-\alpha)r_{AD}^{-\mu}+\sigma_{n_D}^2}.
\end{equation}
Considering the RF energy powered communications are usually low-power and communication range limited, the distance between two nodes is far less than that between the AP and the nodes. Therefore, we assume $r_{AS}\sim r_{AD}$. Besides, as noise power is far less than the signal radiated from the energy source, we further remove $\sigma_{n_D}^2$ from (\ref{eqn:gkk}). Then the SINR is reduced to
\begin{equation}\label{eqn:gkk1}
\gamma_{DT}=\frac{\alpha }{1-\alpha}.
\end{equation}

The throughput of direct transmission link can be expressed as
\begin{displaymath}\label{eqn:rdt}
R_{DT}=(1-\alpha)\log_2{\frac{1}{1-\alpha}}.
\end{displaymath}

If we only aim to maximize the throughput, the optimal { \it harvest-ratio }  $\alpha$ can be easily derived as $1-1/e$ from theorem 1 in \cite{yin2013throughput} . However, if we plot a curve  of throughput over SIR, as shown in Fig.\ref{Fig:Th1}, we can find that the trade-off exists between throughput maximization and QoS optimization. The trade-off comes from the fact that  allocating more time to harvest energy will always increase the SIR from (\ref{eqn:gkk1}), but it just increases the throughput before the critical point $e-1$ while decreases after that point. As well known, to avoid the occurrence of outage, the SIR should exceed certain threshold $\gamma_{o}$. Therefore, the optimal  { \it harvest-ratio } $\alpha$ integrating the outage constraint should be:
\begin{displaymath}
{\alpha}^* = \left\{ \begin{array}{ll}
\gamma_o & \gamma_o\ge e-1\\
1-1/e & \gamma_o< e-1
\end{array} \right.
\end{displaymath}

\begin{figure}
\centering
    \includegraphics[width=0.5\linewidth]{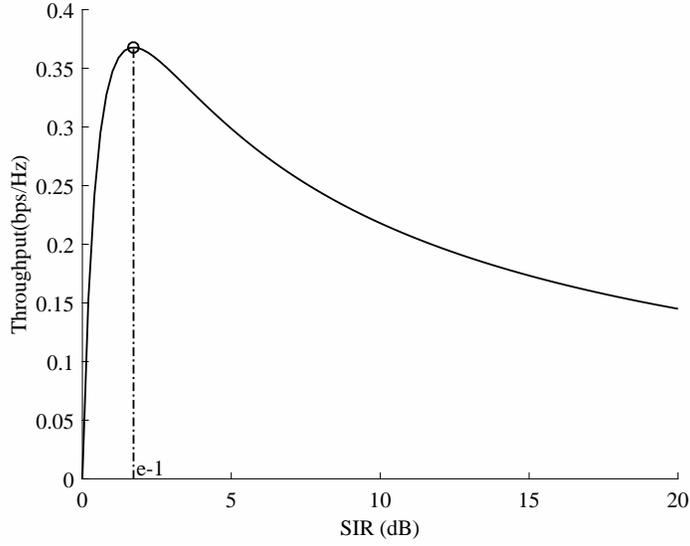}
    \caption{Throughput versus SIR for deterministic channel.}
    \label{Fig:Th1}
\end{figure}

In this paper, we focus on the stochastic case where the channel power gain follows exponential distribution with unit mean. Our goal is to maximize the long-term expectation of the achievable throughput  subject to outage probability constraint. For direct transmission protocol we  formulate the following optimization problem,

\begin{eqnarray}\label{eqn:p1}
\mathrm{\mathbf{P1}}:\underset{\mathit{\alpha}}{\mathbf{max}} &  & \mathbb{E} [R_{DT}] \label{eq:p3} \\
\mathrm{\mathbf{s.t.}} &  & P_{\gamma_o, DT}^{out}\le\theta\\
&&0<\alpha<1,
\end{eqnarray}
where $\mathbb{E}[R_{DT}]$ denotes the expectation of $R_{DT}$,  $\theta$ the maximum tolerable outage probability and $P_{DT}^{out}$ the outage probability with SINR threshold $\gamma_o$.

For DF relay transmission, we formulate the following optimization problem,
\begin{eqnarray}\label{eqn:p2}
\mathrm{\mathbf{P2}}:\underset{\mathit{\alpha, \beta}}{\mathbf{max}} &  & \mathbb{E} [R_{DF}] \label{eq:p3} \\
\mathrm{\mathbf{s.t.}} &  & P_{\gamma_o, DF}^{out}\le\theta\\
&&\mathbb{E}[R_{SR}]\le\mathbb{E}[R_{RD}]\label{eqn:p2-3}\\
&&0<\alpha<1\\
&&0<\beta<1\\
&&\alpha+\beta<1.
\end{eqnarray}
The constraint (\ref{eqn:p2-3}) is the  data causality constraint that the average throughput of $S-R$ link can not be larger than that of $R-D$ link \cite{orhan2015energy}.

Note that for interference limited channel assumption, i.e. $\sigma_{n_D}^2=0$, the SIR and the throughput is irrelevant to the transmitting power of the AP. That is to say increasing the AP's transmitting power will not lead to increasing of achievable throughput. The only parameters that determine the system performance is the time allocation ratio $\alpha$ and $\beta$, which are to be optimized in this paper.

\section{Throughput Maximization for Direct Transmission}\label{sec:dt}

In this section, we first get the outage probability and the average throughput for direct transmission based on the distribution of SIR. Then we solve the optimization problem by convex optimization technique.

\subsection{Outage Probability and Expected Throughput}
Similar to assumption in section \ref{sec:probform}, we get the received SIR of node $D$ by combining (\ref{eqn:es}), (\ref{eqn:ps}) and (\ref{eqn:sinr-dt}) as following,
\begin{equation}\label{eqn:gdt}
\gamma_{DT}=\frac{\alpha}{1-\alpha}\cdot\frac{h_{AS}}{h_{AD}}.
\end{equation}
Since the channel gain $h_{AS}$ and $h_{AD}$ are exponential random variables with unit mean, we get the PDF of $\gamma_{DT}$ as following, according to lemma \ref{lem:cdf}.
\begin{equation}\label{eqn:pdfgdt}
f_{\Gamma_{DT}}(\gamma)= \dfrac{\alpha(1-\alpha)}{[(1-\alpha)\gamma+\alpha]^2}.
\end{equation}

For  given outage threshold $\gamma_o$, the outage probability is thus given by
\begin{align}
P_{\gamma_o,DT}^{out} &= \mathbb{P}(\gamma_{DT}\le \gamma_o) \notag \\
&=\int_{0}^{\gamma_o}f_{\Gamma_{DT}}(\gamma)d\gamma \notag\\
&=\frac{(1-\alpha)\gamma_o}{\alpha +(1-\alpha)\gamma_o}.\label{eqn:pout}
\end{align}

Based on distribution of the SIR, we get the expected throughput of direct transmission over numerous time slots as following,
\begin{align}\label{eqn:erdt}
\mathbb{E}[R_{DT}]=&\mathbb{E}[(1-\alpha)\log_2(1+\gamma_{DT})] \\
=&(1-\alpha)\int_{0}^{\infty}\log_2(1+\gamma_{DT})f_{\Gamma_{DT}}(\gamma)d\gamma\notag\\
\overset{(a)}{=}&\frac{\alpha(1-\alpha)}{1-2\alpha}\log_2(\alpha^{-1}-1)\notag,
\end{align}
where (a) comes by replacing $k$ with $\frac{\alpha}{1-\alpha}$ in Lemma \ref{lem:mean}. It is not difficult to prove that $\mathbb{E}[R_{DT}]$ is a continuous function for $\alpha\in(0,1)$. Substitute (\ref{eqn:pout}) and (\ref{eqn:erdt}) into problem P1, and make some simplifications we get the following equivalent problem:
\begin{eqnarray}
\mathrm{\mathbf{P3}}:\underset{\mathit{\alpha}}{\mathbf{max}} &  &\mathbb{E}[R_{DT}] \label{eqn:obj2}\\
\mathrm{\mathbf{s.t.}} &  &\alpha \ge \frac{\gamma_o(1-\theta)}{\theta +\gamma_o (1-\theta)} \label{eqn:cond2}\\
&&o<\alpha<1.
\end{eqnarray}

\subsection{Solution of Problem P3}
In this part, we first prove the objection function of problem P3 is convex and then obtain the solution by using convex optimization technique.
\begin{lemma}\label{lem:concdt}
The objective function in P3 is concave.
\end{lemma}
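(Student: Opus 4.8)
The plan is to avoid differentiating the closed form (\ref{eqn:erdt}) directly, since the factor $1/(1-2\alpha)$ produces a removable singularity at $\alpha=1/2$ that makes a brute-force second-derivative computation awkward (one would have to verify continuity and match one-sided limits there). Instead I would step back to the \emph{expectation} form of the objective and establish concavity realization-by-realization, then invoke the fact that averaging preserves concavity. This keeps everything on the open interval $(0,1)$ and sidesteps the singularity entirely.

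First I would rewrite $\mathbb{E}[R_{DT}]$ as an expectation over the single ratio variable $Z:=h_{AS}/h_{AD}$. From (\ref{eqn:gdt}) we have $\gamma_{DT}=\frac{\alpha}{1-\alpha}Z$, so that $\mathbb{E}[R_{DT}]=\mathbb{E}_Z\!\left[(1-\alpha)\log_2\!\left(1+\tfrac{\alpha}{1-\alpha}Z\right)\right]$, where by Lemma \ref{lem:cdf} with $k=1$ the variable $Z$ has density $1/(1+z)^2$ on $(0,\infty)$ and, crucially, its law does \emph{not} depend on $\alpha$; Lemma \ref{lem:mean} guarantees the expectation is finite for every $\alpha\in(0,1)$.

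Second I would show that the integrand $\phi(\alpha;z):=(1-\alpha)\log_2\!\left(1+\tfrac{\alpha}{1-\alpha}z\right)$ is concave in $\alpha$ on $(0,1)$ for each fixed $z>0$. The cleanest route is to recognise $\phi$ as a perspective function: writing $f(x)=\log_2(1+zx)$, which is concave and nondecreasing, its perspective $t\,f(x/t)$ is jointly concave on $\{t>0\}$, and $\phi(\alpha;z)$ is exactly this perspective evaluated along the affine path $(x,t)=(\alpha,\,1-\alpha)$, so concavity survives the affine restriction. Equivalently, a direct computation gives $\partial^2\phi/\partial\alpha^2=-z^2/\bigl[(1-\alpha)\,w^2\ln 2\bigr]<0$ with $w:=1-\alpha+\alpha z>0$, the key simplification being the algebraic identity $w^2-(1-\alpha)(1-z)(w+z)=z^2$, which collapses all the cross terms to a single nonnegative square.

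Finally, since $\phi(\cdot;z)$ is concave for every $z>0$, for any $\alpha_1,\alpha_2\in(0,1)$ and $\lambda\in[0,1]$ the pointwise inequality $\phi\bigl(\lambda\alpha_1+(1-\lambda)\alpha_2;z\bigr)\ge\lambda\,\phi(\alpha_1;z)+(1-\lambda)\,\phi(\alpha_2;z)$ holds; taking expectation over $z$ preserves the inequality by monotonicity of $\mathbb{E}$, which shows $\mathbb{E}[R_{DT}]$ is concave. As the feasible set of P3 is an interval (constraint (\ref{eqn:cond2}) together with $0<\alpha<1$) and hence convex, concavity on all of $(0,1)$ carries over to the feasible region. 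The main obstacle is Step two, namely making the per-realization concavity transparent; the perspective viewpoint (or, if one prefers an explicit check, the identity above) resolves it without wrestling with the singular closed form.
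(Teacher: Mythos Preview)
Your argument is correct. By freezing the ratio $Z=h_{AS}/h_{AD}$, whose law is independent of $\alpha$, you reduce the question to the concavity of $(1-\alpha)\log_2\!\bigl(1+\tfrac{\alpha}{1-\alpha}z\bigr)$ for each $z>0$, which follows cleanly from the perspective construction (and your algebraic identity $w^2-(1-\alpha)(1-z)(w+z)=z^2$ checks out as an independent verification). Averaging then preserves concavity, and the feasible set is an interval, so the lemma follows.

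This is a genuinely different route from the paper. The paper differentiates the closed form (\ref{eqn:erdt}) directly, obtains
\[
\frac{d^2\mathbb{E}[R_{DT}]}{d\alpha^2}=-\frac{(t+1)^3}{t(1-t)^3\ln 2}\,\bigl(1+2t\ln t-t^2\bigr),\qquad t=\frac{\alpha}{1-\alpha},
\]
and then proves the auxiliary inequality $1+2t\ln t-t^2\ge 0$ via the Lagrange mean value theorem. Your approach buys you two things: it sidesteps the removable singularity at $\alpha=1/2$ that you correctly flagged as awkward, and it never depends on the specific distribution of $Z$, so it would apply verbatim to any fading law with finite $\mathbb{E}[\log_2(1+cZ)]$. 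The paper's approach, by contrast, is self-contained calculus requiring no convex-analysis machinery, but it has to manage the $1/(1-2\alpha)$ factor and the sign of $(1-t)^3$ across $t=1$ (the function is symmetric about $\alpha=1/2$, which rescues the computation, though the paper does not state this explicitly).
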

\begin{proof}
	See Appendix for details.

\end{proof}
\begin{proposition}\label{pro:opt}
	The optimal $\alpha$ for problem P3 is ${\alpha}^*=\max \Big(0.5,\big(\frac{\theta}{(1-\theta)\gamma_0}+1\big)^{-1}\Big)$.
\end{proposition}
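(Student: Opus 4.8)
The plan is to exploit the concavity from Lemma~\ref{lem:concdt} together with a hidden symmetry of the objective, reducing the proposition to maximizing a one-variable concave function over an interval whose only active boundary is the lower bound~(\ref{eqn:cond2}). First I would locate the unconstrained maximizer of $g(\alpha)=\mathbb{E}[R_{DT}]=\frac{\alpha(1-\alpha)}{1-2\alpha}\log_2\!\big(\frac{1-\alpha}{\alpha}\big)$ on $(0,1)$. Setting $p=\alpha$ and $q=1-\alpha$ gives $g=\frac{pq}{q-p}\log_2(q/p)$, and under the swap $p\leftrightarrow q$ the factor $pq$ is fixed while both $\log_2(q/p)$ and $q-p$ change sign; hence $g(\alpha)=g(1-\alpha)$, i.e.\ the expected throughput is symmetric about $\alpha=1/2$. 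Differentiating this identity yields $g'(1/2)=-g'(1/2)$, so $g'(1/2)=0$, and since $g$ is concave by Lemma~\ref{lem:concdt}, $\alpha=1/2$ is its unique global maximizer.

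Next I would rewrite the feasibility constraint~(\ref{eqn:cond2}) in the proposition's form. Dividing the numerator and denominator of $\frac{\gamma_o(1-\theta)}{\theta+\gamma_o(1-\theta)}$ by $\gamma_o(1-\theta)$ gives the lower bound $\alpha_{\min}=\big(\frac{\theta}{(1-\theta)\gamma_o}+1\big)^{-1}$, which lies in $(0,1)$, so the feasible set is the nonempty interval $[\alpha_{\min},1)$. The argument then closes by a case split on where the peak sits relative to this interval: if $\alpha_{\min}\le 1/2$ the interior maximizer $\alpha=1/2$ is feasible and therefore optimal; if $\alpha_{\min}>1/2$ the feasible interval lies entirely to the right of the peak, where the concave symmetric $g$ is strictly decreasing, so the optimum is the left endpoint $\alpha=\alpha_{\min}$. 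Combining the two cases gives $\alpha^*=\max\big(0.5,\alpha_{\min}\big)$, which is exactly the claimed expression.

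I anticipate the main obstacle to be the point $\alpha=1/2$ itself: the closed form of $g$ has a removable $0/0$ singularity there, since $1-2\alpha$ and $\log_2(\frac{1-\alpha}{\alpha})$ vanish simultaneously. Before invoking differentiability I must confirm that $g$ extends to a smooth function at $1/2$ with finite value; a short expansion in $\alpha=1/2+\epsilon$ shows the limit exists and equals $\tfrac{1}{2\ln 2}$. Treating this removable singularity carefully, rather than naively differentiating the quotient, is the delicate step; once it is handled, the remainder follows routinely from the symmetry and Lemma~\ref{lem:concdt}.
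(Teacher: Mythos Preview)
Your argument is correct and reaches the same conclusion as the paper, but by a genuinely different route. The paper computes $\frac{d}{d\alpha}\mathbb{E}[R_{DT}]$ explicitly, sets it to zero, reduces it to the equation $\ln\!\big(\frac{\alpha}{1-\alpha}\big)=\frac{2\alpha-1}{\alpha^2+(1-\alpha)^2}$, and then observes that $\alpha=1/2$ solves it; concavity (Lemma~\ref{lem:concdt}) guarantees this is the global maximizer, and comparison with the lower bound~(\ref{eqn:cond2}) finishes. You instead exploit the symmetry $g(\alpha)=g(1-\alpha)$, which immediately yields $g'(1/2)=0$ without ever writing down the derivative, and then invoke the same concavity lemma. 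Your approach is cleaner and gives extra structural information (the throughput curve is mirror-symmetric about $\alpha=1/2$), at the cost of having to notice the symmetry in the first place; the paper's approach is more mechanical but requires a somewhat involved derivative computation and recognizing $\alpha=1/2$ as the root. Your explicit treatment of the removable singularity at $\alpha=1/2$ is also more careful than the paper, which glosses over it. The handling of the constraint---rewriting~(\ref{eqn:cond2}) as $\alpha\ge\alpha_{\min}$ and splitting on whether $\alpha_{\min}\lessgtr 1/2$---is essentially identical in both.
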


\begin{proof}
	As problem P3 is an univariate maximization problem with bounded constraint, we could simply maximize the objection function and compare the optimal parameter ${\alpha}^*$ with the bound to get the actual optimal value. We get ${\alpha}^*$ by letting the first derivative of $\mathbb{E}[R_{DT}]$ equal zero,
	\begin{align}
	\frac{d\mathbb{E}[R_{DT}]}{d{\alpha}}&=\frac{{\alpha}^2+(\alpha-1)^2}{(2\alpha-1)^2}\log_2(\frac{\alpha}{1-\alpha})-\frac{1}{(2\alpha-1)\ln 2}=0.\notag
	\end{align}
	This equation can be equivalently translated to the following one
	\begin{equation}
	\ln(\frac{\alpha}{1-\alpha})=\frac{2\alpha-1}{{\alpha}^2+(\alpha-1)^2}    \label{eqn:diff1}
	\end{equation}
	The exact solution for (\ref{eqn:diff1}) is $\alpha^*=0.5$. Considering the lower bound of $\alpha$ subject to (\ref{eqn:cond2}), we can intuitively get Proposition \ref{pro:opt}.
\end{proof}

From Proposition \ref{pro:opt} we find that the expected  throughput would be no more than $0.5\log_2 e\approx 0.7213$ bps/Hz even if we remove the outage constraint, which implies that the spectral efficiency of the non-cooperative WPCs is quite low. However, we can find broad applications  for this  protocol in low-rate wireless sensors networks.

\subsection{Simulation Result}
In Fig.\ref{Fig:r_k}, we present the simulation of average achievable throughput to verify our analytical results. The simulation results are obtained by averaging over 10,000 independent Rayleigh  channel realizations. The analytical results is plotted according to (\ref{eqn:erdt}). We find that the two curves matches well and then the analysis framework is verified.
\begin{figure}
\centering
    \includegraphics[width=0.5\linewidth]{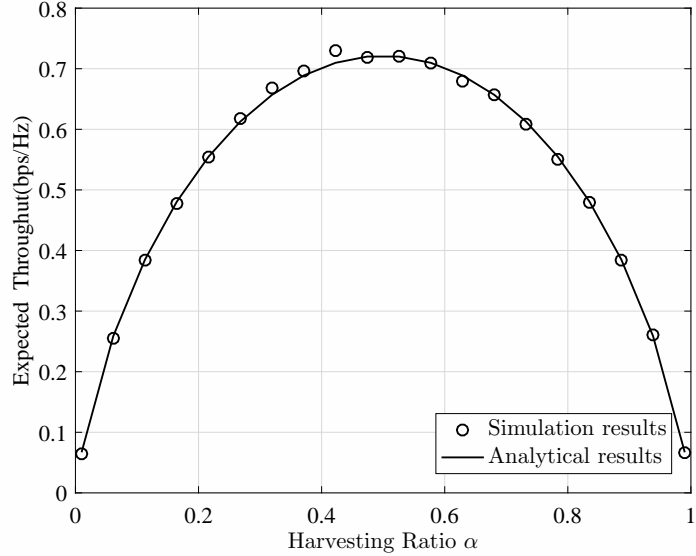}
    \caption{Average achievable throughput versus harvesting ratio.}
    \label{Fig:r_k}
\end{figure}

In Fig.\ref{Fig:r_zeta}, the optimal average throughput versus energy harvesting efficiency $\zeta$ under outage probability threshold 0.05 and 0.02 are depicted, respectively. The outage SIR threshold is set as $\gamma_o=-13$ dB. The curve without outage probability constraint is also given for comparison. We can see that 1) the throughput performance will be improved with weaker outage probability constraint as expected, and be maximized if the outage probability constraint is completely relaxed; 2)The upper bound given by Proposition \ref{pro:opt} is testified  at point (1, 0.73).
\begin{figure}
\centering
    \includegraphics[width=0.5\linewidth]{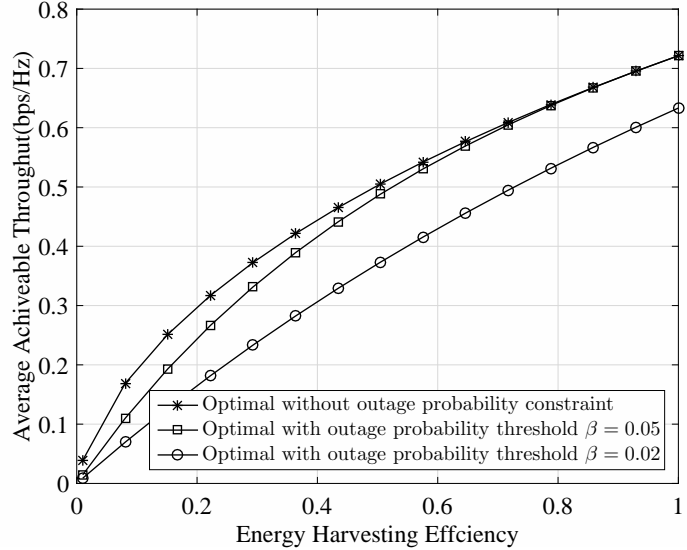}
    \caption{Average achievable throughput versus energy harvesting efficiency.}
    \label{Fig:r_zeta}
\end{figure}

To investigate how the varying outage SIR threshold impacts the optimal {\it harvest-ratio}, we plot Fig.\ref{Fig:k_gamma}. The simulation parameters are the same as in Fig.\ref{Fig:r_k}. We present the cases with higher and lower outage probability threshold, respectively. Our results imply that 1) the optimal {\it harvest-ratio} is determined by the outage probability constraint for most practical scenarios, i.e. $\gamma_o>-10$ dB, while in the lower SIR regime it is just determined by the objective function itself; 2) The optimal $\alpha$ for the case with a stronger outage constraint ($\theta=0.02$) is always higher than that with a weaker one ($\theta=0.05$). This can be explained as that to meet stronger outage constraint, more time should be allocated to harvest energy and then high SIR is achieved; 3)The optimal $\alpha=0.5$ given by Proposition \ref{pro:opt}  is checked.
\begin{figure}
\centering
    \includegraphics[width=0.5\linewidth]{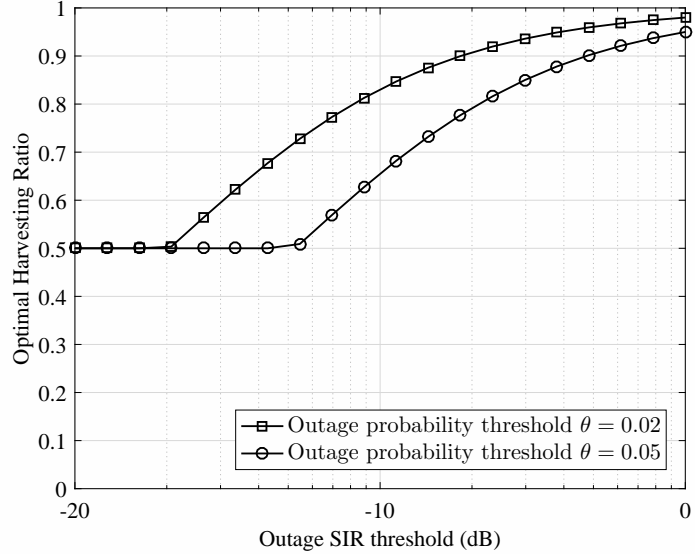}
    \caption{Optimal harvesting ratio versus outage SIR threshold.}
    \label{Fig:k_gamma}
\end{figure}
\section{Throughput maximization for DF relay transmission}
\subsection{Outage Probability and Expected Throughput}
For DF relay aided communication, the outage occurs when anyone of the two links,  $S-R$ or $R-D$ link,  is outaged. Thus the overall outage probability under DF cooperative protocol can be presented as
\begin{equation}\label{eqn:poutdf}
P_{\gamma_o, DF}^{out}=1-(1-P_{\gamma_o, SR}^{out})(1-P_{\gamma_o, RD}^{out})
\end{equation}
We next get the expression for $P_{\gamma_o, SR}^{out}$ and $P_{\gamma_o, RD}^{out}$. Following the same assumption as sections \ref{sec:dt}, we can get the received SIR at node $R$  from (\ref{eqn:sinr-sr}),  and the received SIR at node $D$ from (\ref{eqn:sinr-rd}) as following,
\begin{equation}\label{eqn:sir-sr}
\gamma_{SR}=\frac{\alpha d^{-\mu}}{\beta}\cdot\frac{h_{AS}}{h_{AR}},
\end{equation}

\begin{equation}\label{eqn:sir-rd}
\gamma_{RD}=\frac{\alpha d^{-\mu}}{1-\alpha-\beta}\cdot\frac{h_{AR}}{h_{AD}}.
\end{equation}

The distribution of $\gamma_{SR}$ and $\gamma_{RD}$ are similar to that of direct transmission protocol. We can obtain them by replacing $k$ with $\frac{\alpha d^{-\mu}}{\beta}$ and $\frac{\alpha(1- d)^{-\mu}}{1-\alpha-\beta}$ in Lemma \ref{lem:cdf}, respectively, as following.
\begin{equation}\label{eqn:sir-sr-dist}
f_{\Gamma_{SR}}(\gamma)= \dfrac{\alpha\beta d^{-\mu}}{[\alpha d^{-\mu}+\gamma\beta]^2},
\end{equation}
and
\begin{equation}\label{eqn:sir-rd-dist}
f_{\Gamma_{RD}}(\gamma)= \dfrac{\alpha(1-\alpha-\beta) (1-d)^{-\mu}}{[\alpha d^{-\mu}+\gamma(1-\alpha-\beta)]^2}.
\end{equation}

The outage probability of link $S-R$ is derived by introducing $\gamma_o$ to the CDF of SIR,
\begin{align}\label{eqn:pout-sr}
P_{\gamma_o,SR}^{out} &= \mathbb{P}(\gamma_{SR}\le \gamma_o) \notag \\
&=\int_{0}^{\gamma_o}f_{\Gamma_{SR}}(\gamma)d\gamma \notag\\
&=\frac{\beta\gamma_o}{\alpha d^{-\mu} +\beta\gamma_o}.
\end{align}
Similarly, the outage probability of link $R-D$ is derived as
\begin{align}\label{eqn:pout-rd}
P_{\gamma_o,RD}^{out} &= \mathbb{P}(\gamma_{RD}\le \gamma_o) \notag \\
&=\int_{0}^{\gamma_o}f_{\Gamma_{RD}}(\gamma)d\gamma \notag\\
&=\frac{(1-\alpha-\beta)\gamma_o}{\alpha(1-d)^{-\mu} +(1-\alpha-\beta)\gamma_o}.
\end{align}
Substitute (\ref{eqn:pout-sr}) and (\ref{eqn:pout-rd}) to (\ref{eqn:poutdf}),  we get the overall outage probability of the DF relay transmission as
\begin{equation}\label{eqn:poutdf2}
P_{\gamma_o, DF}^{out}=1-\dfrac{\alpha^2(d(1-d))^{-\mu}}{(\alpha d^{-\mu}+\beta\gamma_o)(\alpha(1-d)^{-\mu}+(1-\alpha-\beta)\gamma_o)}
\end{equation}

With the PDF of $\gamma_{SR}$ and $\gamma_{RD}$, the expected throughput of link $S_R$ and $R-D$ are derived as
\begin{align}\label{eqn:ersr}
\mathbb{E}[R_{SR}]=&\mathbb{E}[\beta\log_2(1+\gamma_{SR})] \\
=&\beta\int_{0}^{\infty}\log_2(1+\gamma_{SR})f_{\Gamma_{SR}}(\gamma)d\gamma\notag\\
=&\frac{\beta}{\beta d^{\mu}\alpha^{-1}-1}\log_2(\beta d^{\mu}\alpha^{-1})\notag
\end{align}
and
\begin{align}\label{eqn:errd}
\mathbb{E}[R_{RD}]=&\mathbb{E}[\beta\log_2(1+\gamma_{RD})] \\
=&(1-\alpha-\beta)\int_{0}^{\infty}\log_2(1+\gamma_{RD})f_{\Gamma_{RD}}(\gamma)d\gamma\notag\\
=&\frac{(1-\alpha-\beta)}{(1-\alpha-\beta)(1- d)^{\mu}\alpha^{-1}-1}\log_2((1-\alpha-\beta) (1-d)^{\mu}\alpha^{-1})\notag,
\end{align}
respectively.

For the DF relay transmission powered by RF energy, we assume that the direct link from $S$ to $D$ can be ignored due to its limited transmission ability. This assumption can also be used to get  the lower bound of an actual relay system.  Based on this assumption, the overall expected throughput is determined by the lower one between  $S-R$ and $R-D$ link,
\begin{equation}\label{eqn:erdt2}
\mathbb{E}[R_{DF}]=\min\{\mathbb{E}[R_{SR}],\mathbb{E}[E_{RD}]\}
\end{equation}
Considering that the data causality constraint  of problem P2, the objective function can be equally converted to
\begin{eqnarray}\label{eqn:p4}
\mathrm{\mathbf{P4}}:\underset{\mathit{\alpha, \beta}}{\mathbf{max}} &  & \mathbb{E} [R_{SR}]  \\
\mathrm{\mathbf{s.t.}} &  & P_{\gamma_o, DF}^{out}\le\theta \label{eqn:p4-1}\\
&&\mathbb{E}[R_{SR}]\le\mathbb{E}[R_{RD}] \label{eqn:p4-2}\\
&&0<\alpha<1\\
&&0<\beta<1\\
&&\alpha+\beta<1.
\end{eqnarray}

\subsection{Solution of Problem P4}
The objective function of P4 can be translated to finding the optimal \textit{harvest ratio} only considering the first two phases of the whole slot, which can be solved by using method similar to P3. Therefore, to simplify the analysis we divide the original problem two steps. In the first step, we will find the optimal \textit{harvest ratio} denoted by $\kappa=\frac{\alpha}{\beta}$. Note that we just find the optimal ratio of two times but not the times itself. In the second step, integrating the optimal ratio and the constraints in P4 we find the optimal harvesting time $\alpha$ and $S-R$ transmitting time $\beta$.

First, we see that finding the optimal \textit{harvest ratio } in P4 is not different from that in P2. The only difference is that in P3 the distance between $S-D$ is $1$ but in P4, the distance between $S-R$ is shorter and denoted as $d$. Setting $\alpha=\kappa\beta$, the corresponding expected throughput over fading power transfer channel is
\begin{equation}\label{eqn:ersr-p5}
\mathbb{E}[R_{SR}]=\frac{z}{(1+\kappa)}\frac{\log_2(\kappa^{-1}d^{\mu})}{\kappa^{-1}d^{\mu}-1},
\end{equation}
where $z=\alpha+\beta$ is assumed to be a constant in the first step. We name $z$ the \textit{harvest-and-first-hop} sum time which will be optimized in the second step. The quasi-concavity of function (\ref{eqn:ersr-p5}) can be verified by investigating its second derivative. However, the proof is tedious and we neglect it due to space limit. We will also show the curve of $\mathbb{E}[R_{SR}]$ as proof in the simulation part. The optimal $\kappa$ is given by
solving
\begin{equation}\label{eqn:optk}
\frac{d\mathbb{E}[R_{SR}]}{d\kappa}=0.
\end{equation}
However, there is no close form solution for (\ref{eqn:optk}), we can numerically calculate the optimal $\kappa^*$.

Till now we get the optimal time allocation ratio under a constant sum time $z$. Next, we will find the optimal sum time $z$ that meets the data causality constraint and outage probability constraint. Substitute $\alpha=\frac{\kappa z}{1+\kappa}$ and $\beta=\frac{z}{\kappa +1}$ to (\ref{eqn:p4-1}) and make some simplification we get the equivalent constraint of (\ref{eqn:p4-1}) as
\begin{equation}\label{eqn:cp51}
z\ge\dfrac{\gamma_o(1+\kappa)}{\gamma_o(1+\kappa)+\kappa^2[d(1-d)]^{-\mu}(\kappa d^{\mu}+\gamma_o)^{-1}-\kappa(1-d)^{-\mu}}.
\end{equation}

We denote the right hand of (\ref{eqn:cp51}) as $\widetilde{z_1}(\kappa,  \theta, \gamma_o) $ for convenience of description. Similarly, we introduce $\alpha=\frac{\kappa z}{1+\kappa}, \beta=\frac{z}{\kappa +1}$ to (\ref{eqn:p4-2}) and get its equivalent constraint as
\begin{equation}\label{eqn:p5c2}
\frac{\tau}{\tau-1}\log_2\tau\ge \Psi ,
\end{equation}
where
\begin{equation}\label{eqn:tau}
\tau=(z^{-1}-1)(\kappa^{-1}+1)(1-d)^{-\mu}
\end{equation}
and
\begin{equation}\label{eqn:psi}
\Psi=\frac{(1+\kappa)(1-d)^{\mu}}{d^{\mu}-\kappa}\log_2(\kappa^{-1}d^{\mu}).
\end{equation}

We define function $f(\tau)=\frac{\tau}{\tau-1}\log_2\tau$. Obviously it  is a monotonically increasing function with $\tau$ for $\tau>0$. Assuming $\tau^*=f^{-1}(\Psi)$, we get the equivalent constraint of (\ref{eqn:p4-2}) as
\begin{equation}\label{eqn:p5c21}
z\le \left[\frac{\kappa\tau^*}{(1+\kappa)(1-d)^{\mu}}+1\right]^{-1},
\end{equation}
the right hand of which is denoted as $\widetilde{z_2}(\kappa)$ for easy description.

In conclude, we explain the solving process of problem P4 as following,
\begin{enumerate}
	\item Numerically evaluate equation (\ref{eqn:optk}) to get optimal \textit{harvest ratio} $\kappa^*$.
	\item Solve the following problem
	\begin{eqnarray}\label{eqn:p5}
	\underset{\mathit{z}}{\mathbf{max}} &  & z\dfrac{\log_2(\kappa^{*-1}d^{\mu})}{(1+\kappa^*)(\kappa^{*-1}d^{\mu}-1)} \\
	\mathrm{\mathbf{s.t.}} &  & \widetilde{z_1}\le z\le\widetilde{z_2}.
	\end{eqnarray}
	\item Obtain the optimal time allocation ratio by $\alpha=\frac{\kappa^*z}{1+\kappa^*}$ and $\beta=\frac{z}{1+\kappa^*}$.
\end{enumerate}
\subsection{Simulation Result}
In this part,  we first verify the quasi-concavity of the expected throughput function. Then we investigate the feasible region of the  \textit{harvest-and-first-hop} sum time for varying SIR threshold values. At last, we compare the optimal  throughput performance of direct transmission and DF relay transmission.
In figure.\ref{fig:r_k_df}, the expected throughput of the DF relay system over the \textit{harvest ratio} $\kappa$ is depicted. In this simulation, We set $z=1$, $d=0.5$ and $\mu=2$. The simulation results is given through 10000 independent Rayleigh fading channel realizations, while the analytic results is given by (\ref{eqn:ersr-p5}). It can be found that the two curves match well, thus the quasi-concavity of expected throughput over \textit{harvest ratio} is verified.

\begin{figure}
	\centering
	\includegraphics[width=0.5\linewidth]{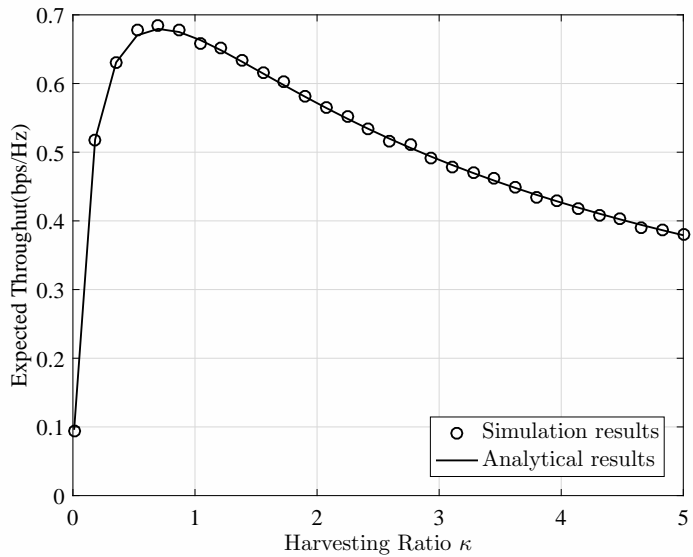}
	\caption{Expected  throughput versus harvesting ratio $\kappa$ for DF relay transmission.}
	\label{fig:r_k_df}
\end{figure}

In figure.\ref{fig:z_sir}, we demonstrate the feasible region of \textit{harvest-and-first-hop} sum time versus outage SIR threshold. In this simulation, we set $d=0$, $\mu=2$ and $\gamma_o$ varies from -20 dB to 0 dB. Since  $\widetilde{z_1}\le z\le\widetilde{z_2}$, we find that the feasible region of $z$ is narrow mainly due to data causality. However, this region for looser outage constraint ($\theta=0.05$) is much wider than that for the tighter one ($\theta=0.02$).

\begin{figure}
	\centering
	\includegraphics[width=0.5\linewidth]{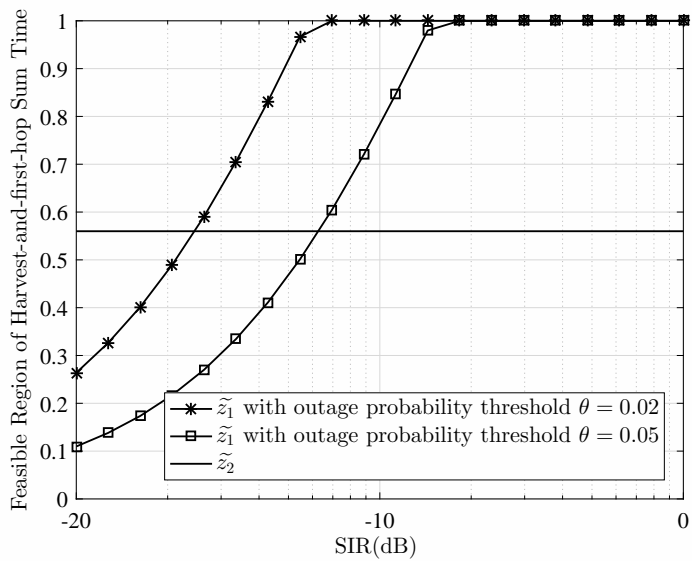}
	\caption{Optimal \textit{harvest-and-first-hop} sum time $z$ versus outage SIR threshold $\gamma_o$ for DF relay transmission.}
	\label{fig:z_sir}
\end{figure}

In figure.\ref{fig:th_d}, we depict the two curves of expected throughput performance over source-relay distance. We set SIR threshold $\gamma_o=-18dB$ and source-relay distance $d$ varies from 0 to 1. The figure shows that when the S-R distance is small, the direct transmission protocol posses better performance, while for larger S-R distance (i.e. $d>0.5$) the cooperative protocol is better. Especially, the performance gain is greatly improved for larger path-loss exponent. This observation coincides the fact that larger path-loss links can gain more benefit from cooperative communications.

\begin{figure}
	\centering
	\includegraphics[width=0.5\linewidth]{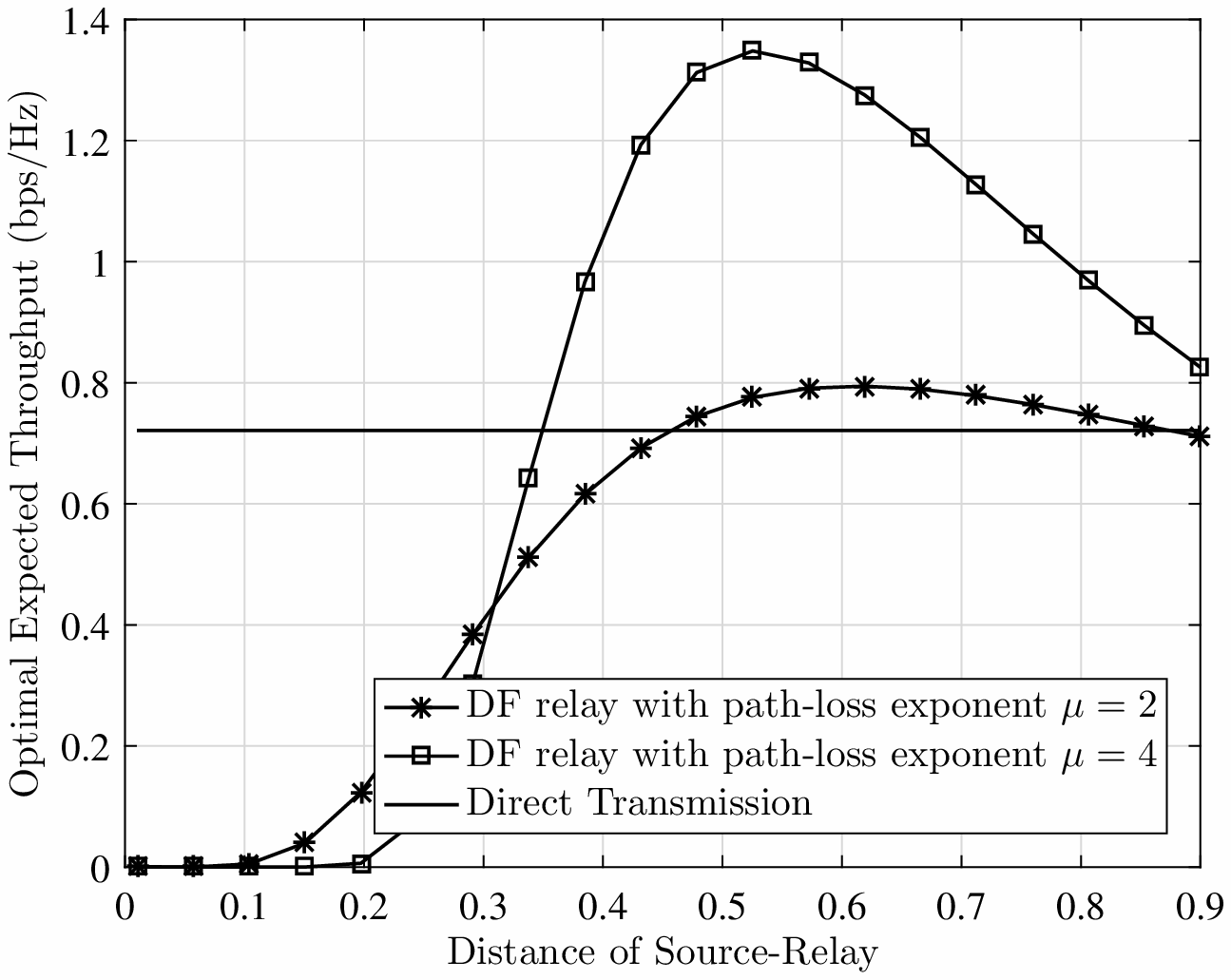}
	\caption{Optimal expected throughput versus source-relay distance for DF relay transmission and direct transmission.}
	\label{fig:th_d}
\end{figure}

\section{Conclusion}

This paper studied the expected throughput optimization for wireless communications powered by non-dedicated sources. We formulate the optimal problems to maximize the expected throughput for direct transmission and DF relay transmission, respectively, subject to outage probability constraint. The optimal {\it harvest ratio} is derived by convex optimization technique. We find that the optimal throughput is irrelevant to the transmitting power of energy source in interference limited environment and upper bounded by a constant $0.5\log_2{e} $(bps/Hz). For most practical  settings, We conclude that  the optimal  {\it harvest ratio} is dominated by outage probability constraint with direct transmission, while by data causality with DF relay transmission .

\appendix
\section*{APPENDIX}
\setcounter{section}{1}
\subsection{Proof of Lemma \ref{lem:cdf}}\label{app:1}
Since the density of ratio of two independent exponential random variables with unit mean is $1/(1+t)^2$  \cite{kadri2014exact}, thus the CDF of $X=k\frac{H_1}{H_2}$ can be obtained as follows,
\begin{align}
F_{X}(x)&=\mathbb{P}(X\le x)\nonumber\\
&=\mathbb{P}(T\le\frac{x}{k})\nonumber\\
&=\int_{0}^{\frac{x}{k}}\frac{1}{(1+t)^2}dt\nonumber\\
&=\frac{x}{k+x}\label{eqn:fgg}
\end{align}
Taking the derivative of (\ref{eqn:fgg}) we finish the proof.
\subsection{Proof of Lemma \ref{lem:mean}}\label{app:2}
\begin{align}
\mathbb{E}[\log(1+x)]&=\int_{0}^{\infty}\log(1+x)f_{X}(x)dx \nonumber\\
&=\int_{0}^{\infty}\log(1+x)\frac{k}{(k+x)^2}dx \nonumber\\
&=\frac{\log(1/k)}{1/k-1} \label{eqn:ec}
\end{align}
\subsection{Proof of Lemma \ref{lem:concdt}}
We have shown that the objective function is continuous in the Lemma 2. Therefore, we investigate the second derivative of the objective function to verify its concavity.

The second derivative of (\ref{eqn:obj2}) is given by
\begin{align}
\frac{d^2\mathbb{E}[R_{DT}]}{d{\alpha}^2}&=\frac{2}{(2\alpha-1)^3}\log_2(\frac{\alpha}{1-\alpha})+\frac{1}{(2\alpha-1)^2\alpha(\alpha-1)\ln2}\notag\\
&\overset{(a)}{=} -\frac{(t+1)^3}{t(1-t)^3\ln 2}(1+2 t\ln t -t^2),\label{eqn:df2}
\end{align}
where (a) comes by letting $t=\frac{\alpha}{1-\alpha}$. As $0<\alpha<1$ so that  $0<t<1$.

We now prove $f(t)=1+2t\ln t-t^2>0$. Suppose $f(x)=1+2x\ln x-x^2, x\in[t,1]$, according to Lagrange Mean Value theorem, there is a number $\xi\in(t,1)$ satisfying the following equation

\begin{align*}
\frac{f(1)-f(t)}{1-t}&= f'(\xi)  \\
&= 2[(1-\xi)+\ln{\xi}] \\
&\le 2[(1-\xi)+(\xi-1)]\\
&=0.
\end{align*}
Then we get $f(1)-f(t)\le 0$ followed by $f(t)\ge f(1)=0$. Integrating this result into (\ref{eqn:df2}),
\begin{displaymath}
\frac{d^2\mathbb{E}[R_{DT}]}{d{\alpha}^2}\le0
\end{displaymath}
and then the concavity of (\ref{eqn:obj2}) is proved.

\bibliographystyle{IEEEtran}
\bibliography{ref}                 

\end{document}